\numberwithin{equation}{section}
\numberwithin{figure}{section}
\theoremstyle{plain}
\newtheorem{thm}{\protect\theoremname}
  \theoremstyle{remark}
  \theoremstyle{plain}
  \newtheorem{lem}[thm]{\protect\lemmaname}
  \providecommand{\claimname}{Claim}
  \providecommand{\lemmaname}{Lemma}
\providecommand{\theoremname}{Theorem}
\begin{document}
\global\long\def\diag{\textrm{diag}}
\global\long\def\reals{\mathbb{R}}

\global\long\def\reals{\mathbb{R}}

\global\long\def\eps{\varepsilon}

\global\long\def\els{\mathbb{C}}

\global\long\def\str{E}

\global\long\def\engf{\mathcal{F}}

\global\long\def\cordm#1{\varphi^{#1}}

\global\long\def\W{W}

\global\long\def\eef{\mathcal{W}}

\global\long\def\lp{\mathcal{F}}

\global\long\def\tp{\Pi}

\global\long\def\diss{\mathcal{U}}

\global\long\def\modc#1{\overline{#1}}

\global\long\def\TD{\mathcal{TD}}

\global\long\def\ad{\mathcal{A}}

\global\long\def\bl{b}

\global\long\def\cl{c}

\global\long\def\bnd{\partial}

\global\long\def\gl#1#2{\$\Gamma$-\lim_{#1}\left(#2\right)}

\global\long\def\rest{\llcorner}

\author{Lior Falach}
\address{Dipartimento di Architettura Design e Urbanistica, Universit\`a di Sassari, Palazzo del Pou Salit,
Piazza Duomo 6, 07041 Algeras (SS), Italy}
\email{falachl@post.bgu.ac.il}

\author{Roberto Paroni}
\address{Dipartimento di Architettura Design e Urbanistica, Universit\`a di Sassari, Palazzo del Pou Salit,
Piazza Duomo 6, 07041 Algeras (SS), Italy}
\email{paroni@uniss.it}

\author{Paolo Podio-Guidugli}
\address{Accademia Nazionale dei Lincei \&  Dip. di Matematica, Universit\`a di Roma TorVergata,
Via della Ricerca Scientifica, 1, 00133 Roma, Italy}
\email{ppg@uniroma2.it}

\title{A Justification of the Timoshenko Beam Model through $\boldsymbol\Gamma$-Convergence}

\maketitle

\begin{abstract}
We validate the Timoshenko beam model as an approximation of the linear-elasticity model of a three-dimensional beam-like body. Our validation is achieved within the framework of  $\Gamma$-convergence theory, in two steps:  firstly, we construct a suitable sequence of energy functionals; secondly, we show that this sequence $\Gamma$-converges to a functional representing the energy of a Timoshenko beam. 
\end{abstract}

\section{Introduction}

Understanding the relation between three-dimensional elasticity and
the lower-dimensional  theories of elastic structures is a long-standing quest in rational continuum mechanics.
All classic models for shell-, plate-, and beam-like bodies rely on some \emph{Ans\"atze} about their kinematic and/or static behaviour, motivated by their thinness or slenderness, that is, by the smallness of one or two of their dimensions.
%The use of approximated mechanical theories in structural analysis
%requires a rigorous scrutiny of the underlaying a priori assumptions.
Such \emph{Ans\"atze} are expedient to put together mathematical models that are both simple and capable to provide good-enough predictions  for plenty of the intended applications. However, as is always the case for intuition-based models, an all-important experimental confirmation does not  replace for a rigorous \emph{justification}, that is,  validation as a convincing approximation of an accepted parent theory.

Justification of  lower-dimensional structure theories has been attempted in a number of ways, some essentially analytical in nature, like  the method of asymptotic expansion \cite{CD79} and the functional analysis methods accounted for in \cite{Ciarlet, FMP09}, and some essentially mechanical, like the method of formal scaling \cite{Miara2007} and the method based on a thickness-wise expansion \cite{Steigmann};  the predictions of lower-dimensional models can also be corroborated by error estimates \cite{FMP10}. In the past couple of decades, a noticeable amount of work has been devoted
to provide a justification of various structure theories within the framework of  $\Gamma$-convergence; in particular,  Reissner-Mindlin's theory of shearable plates has been  considered in \cite{Paroni2006,Paroni2007,Paroni2013}.  In this paper, for the first time, a justification via $\Gamma$-convergence is given for Timoshenko's sheareable beam theory \cite{Timoshenko1921}, the one-dimensional theory that parallels the two-dimensional Reissner-Mindlin's theory.\footnote{We refer the reader to \cite{Maso1993} for a comprehensive introduction to $\Gamma$-convergence, and to \cite{Antman1995} for an authoritative 
    survey of beam models.}

Heuristically, justification of beam models can be achieved by calibrating the convergence rate of the three-dimensional elastic energy (briefly, the energy)  with respect to the diameter
of the cross section. Anzellotti \emph{et al.} \cite{Anzellotti1994} and Bourqin \emph{et al.} \cite{BCGR92}
were first to apply the theory of $\Gamma$-convergence to deduce unshearable beam and plate models within the setting of three-dimensional linear elasticity. Their results have been generalized in several ways, and different  models of beams, plates, and shells have been studied, within the linear as well as the nonlinear elasticity framework. We refrain from citing the large literature on the subject;  we only 
mention a paper by Mora and M\"uller \cite{Mora2003}, where  
a justification of a bending-torsion beam model is provided within the non-linear elasticity 
framework.

In the engineering community, Bernoulli-Navier's and Timoshenko's beam theories are well
accepted and constantly used in applications. As mentioned, the former model has been
fully justified since long by means of the theory of $\Gamma$-convergence, whereas a similar justification
for the latter was lacking.  In our opinion, this was
 due to the fact that the ``classical'' procedure used in the study of dimension-reduction problems was excessively ``constrained''. The constraints in question were relaxed in \cite{{Paroni2006},{Paroni2007}}, so as
to deduce the Reissner-Mindlin's plate model; the procedure there used was later  
classified  in \cite{Paroni2013} as a special case of a general scheme. Briefly,  for a given three-dimensional problem (the so-called ``real problem''), the scheme consists in defining a problem sequence whose variational limit approximates the ``real problem''.

Here is a summary of the contents to follow. In Section \ref{sec:The_real_problem} we present the real problem for a linearly elastic and transversely isotropic three-dimensional beam-like body; in Section \ref{sec3} we recall Timoshenko's kinematical assumptions; and, in 
Section \ref{sec:A-sequence-of_domains}, along the lines  proposed
in \cite{Paroni2013} and further discussed in \cite{Podio-Guidugli2014}, we introduce a sequence of variational
problems, one of which is the real problem.
The main  compactness results are deduced in Section \ref{sec:Compactness},
while the $\Gamma$-limit result is stated and proved in
Section \ref{sec:The-G-limit}; interestingly, the $\Gamma$-limit turns out to be the energy functional of the one-dimensional Timoshenko's beam theory. 
Finally, in Section \ref{concluding}, we show that the $\Gamma$-limit problem  approximates the real problem well.

As to notation, throughout this work Greek indices $\alpha,\beta$
 take values in the set $\left\{ 1,2\right\} $, Latin indices
$i,j$ in the set $\left\{ 1,2,3\right\} $. We use $L^{2}\left(A;B\right)$
and $H^{1}\left(A;B\right)$ to denote, respectively, the Lebesgue
and Sobolev spaces of functions defined over the set $A$ and taking values in
the vector space $B$; in case $B=\reals$, we simply write $L^{2}\left(A\right)$
and $H^{1}\left(A\right)$; the corresponding norms are denoted by
$\left\Vert \cdot\right\Vert _{L^{2}\left(A;B\right)}$ and $\left\Vert \cdot\right\Vert _{H^{1}\left(A;B\right)}$.

\section{The real problem\label{sec:The_real_problem}}%: \\ equilibrium of a transversely isotropic beam-like body }

We consider a three-dimensional body occupying a domain under form of a right cylinder $\Omega_{r}=\omega_{r}\times(0,L)\subset\reals^{3}$ of length $L$,
whose cross-section $\omega_{r}\subset\reals^{2}$ is an open bounded simply-connected set with  Lipschitz boundary $\bnd\omega_{r}$. For $d_{r}$
%=\sup_{y,\, z\in\omega_{r}}\|y-x\|$ be 
the diameter of the cross-section, we set $\eps_{r}:=d_r/L$ and we call $\Omega_{r}$ \emph{beam-like}, because we take $\eps_{r}\ll1$.

We denote by $e_{i}$ the vectors of an orthonormal 
basis, with vectors $e_{\alpha}$ tangent
to $\omega_{r}$, where the origin of the Cartesian frame we use is located, and vector $e_{3}$ in the direction
of beam axis. For definiteness and simplicity, we stipulate that the body is clamped on the part  $\bnd_{D}\Omega_{r}=\omega_{r}\times\{0\}$ of its boundary. Moreover, we denote by $\diss_{r}$  the space of admissible displacements:
%associated with the beam-like body given by 
\[
\mathcal{U}_{r}=\left\{ \modc u\in H^{1}\left(\Omega_{r};\reals^{3}\right)\mid\modc u=0\;\;\text{on }\bnd_{D}\Omega_{r}\right\} ;
\]
we measure the admissible strains point-wise by means of the symmetric tensor
$$
\str\modc u:=\frac{1}{2}\left(\nabla\modc u+\nabla\modc u^{T}\right);
$$ 
and we denote by $\reals_{\mathrm{sym}}^{3\times3}$  the collection of all symmetric linear mappings  of $\reals^{3}$ into itself.
%is the linearized strain associated to the displacement $\modc u$.
Finally, we assume the material to be linearly elastic and \emph{transversely isotropic with respect to the $e_{3}$ direction}, so that the
elastic-energy density per unit volume  %$\W:\reals_{\mathrm{sym}}^{3\times3}\to\reals$,
is given by
\begin{equation}
\begin{split}W\left(E\right):= & \frac{1}{2}\left[2\mu\left(E_{11}^{2}+E_{22}^{2}\right)+\lambda\left(E_{11}+E_{22}\right)^{2}+2\tau_{2}E_{33}\left(E_{11}+E_{22}\right)\right.\\
 & \quad\left.+4\mu E_{12}^{2}+4\gamma\left(E_{31}^{2}+E_{32}^{2}\right)+\tau_{1}E_{33}^{2}\right],
\end{split}
\label{eq:W}
\end{equation}
where the material moduli $\mu, \lambda, \tau_1,\tau_2$, and $\gamma$, satisfy the following inequalities:
\begin{equation}
\mu>0,\:\gamma>0,\:\tau_{1}>0,\:\tau_{1}(\lambda+\mu)-\tau_{2}^{2}>0 \label{eq:material_bounds};
\end{equation}
consequently, the elastic-energy density is positively bounded below by the strain norm, in the sense that there is a positive constant $C$ such that $\W(E)\geq C|E|^{2}$
for every $E\in\reals_{\mathrm{sym}}^{3\times3}$.  

In this section and henceforth, we  systematically make use of a subscript or superscript `r' as a reminder of the fact that all kernel letters  carrying that modifier are used in connection with a \emph{real} three-dimensional equilibrium problem for a beam-like body.

The \emph{elastic potential} $\eef^{r}:\diss_{r}\to\reals$  associated with the beam-like body $\Omega_r$ is:
\begin{equation}
\eef^{r}(\modc u)=\int_{\Omega_{r}}\W(\str\modc u)\,d\modc x,\qquad\text{for all }\modc u\in\diss_{r}.\label{eq:real_elastic_potential}
\end{equation}
Since we let $\Omega_r$ be subjected to a distance load $\modc{\bl}^{r}:\Omega_{r}\to\reals^{3}$
and a contact load $\modc c^{r}:\bnd_{N}\Omega_{r}\to\reals^{3}$,
with $\bnd_{N}\Omega_{r}=\bnd\Omega_{r}\backslash\bnd_{D}\Omega_{r}$,
the \emph{load potential} is:
\begin{equation}
\engf^{r}(\modc u):=\int_{\Omega_{r}}\modc b^{r}\cdot\modc u\,d\modc x+\int_{\bnd_{N}\Omega_{r}}\modc c^{r}\cdot\modc u\,d\modc a.\label{eq:real_load_potential}
\end{equation}
Finally, the \emph{total potential} is:
\begin{equation}
\tp^{r}(\modc u):=\eef^{r}(\modc u)-\lp^{r}(\modc u).\label{eq:real_potential}
\end{equation}
Hereafter we denote by $\modc u_{\min}^r$ the unique minimizer of $\tp^{r}$:
\begin{equation}
\modc u_{\min}^r=\arg\min_{\modc u\in\mathcal{U}_{r}}\tp^{r}(\modc u).\label{eq:real_minimizer}
\end{equation}
\section{The Timoshenko Ansatz and its mechanical interpretation}\label{sec3}
Roughly speaking, according to the kinematical \emph{Ansatz} on which Timoshenko's
beam model is based, the beam's cross-section is regarded as rigid, while
the beam's axis may deform arbitrarily.  In fact, a prototypical
Timoshenko displacement field has the following form:
\begin{equation}
u_T=u_{1}(x_{3})e_{1}+u_{2}(x_{3})e_{2}+\left(u_{3}(x_{3})+x_{2}\psi_1(x_{3})-x_{1}\psi_2(x_{3})\right)e_{3},\label{eq:Timoshenko_dis}
\end{equation}
where $u_{i}$ and $\psi_{\alpha}$ are real-valued functions defined on $\left(0,L\right)$.

It was shown in \cite{Lembo2001} that, to within rigid global displacements, the above displacement field
is the general solution of the following set of PDEs:
\begin{equation}\label{TimoPDE}
\begin{split}
2(Eu)_{\alpha\beta}=u_{\alpha,\beta}+u_{\beta,\alpha} & =0,\\
2\big((Eu)_{3\alpha}\big),_\beta=u_{3,\alpha\beta}+u_{\alpha,3\beta} & =0.
\end{split}
\end{equation}
The following precise kinematical interpretation of \eqref{eq:Timoshenko_dis}  was given: for each fixed axial coordinate $x_3$,  the first-order PDEs in \eqref{TimoPDE} imply that  cross-section fibers neither change their length nor change their mutual angle, while the second-order PDEs imply that the change in angle between an axial fiber and a cross-section fiber does not depend on the direction of the latter.
It was also remarked in \cite{Lembo2001} that the each of the above PDEs can be interpreted as a non-dissipative internal-constraint condition, and the reaction stresses  
and hyperstresses maintaining those constraints were determined. 

In \cite{Miara2007}, it was suggested that both shearable-structure theories, Reissner-Mindlin's for plates and Timoshenko's for beams, are derivable from a three-dimensional elastic-energy functional that  includes a second-order strain contribution. 
This suggestion will be taken into account in the construction of the energy sequence given here below.

\section{A sequence of variational problems }\label{sec:A-sequence-of_domains}

In this section we construct a sequence of variational problems, parameterized by  $\eps$,  such that the ``real problem'' presented in Section \ref{sec:The_real_problem} is the element of the sequence whose parameter $\eps$ is equal to $\eps_r$. Our construction of the typical problem in this sequence is achieved in a number of scaling steps, both for data and candidate solutions.
%In the following sections we will show that the variational limit of the constructed sequence is
%the energy of a Timoshenko beam.  

\subsection{Domain} 
For $\alpha$ positive, let $R^\alpha:=\diag(\alpha,\alpha,1)$ be a diagonal $3\times 3$ matrix; moreover, for $x^r\equiv(x^r_1,x^r_2,x^r_3)$ a typical point of $\Omega_r$ and for any given $\eps>0$, let $\Omega$ and $\Omega_\eps$ denote the sets in $\reals^3$ whose typical points are, respectively, $x\equiv(x_1,x_2,x_3)=R^{1/\eps_r}x^r$ and $x^\eps\equiv (\eps x_1,\eps x_2,x_3)=R^\eps x$ (note that, consequently, $x^\eps=R^{\eps/\eps_r} x^r$). It follows from these definitions that
$$\Omega_{\eps}=\omega_{\eps}\times(0,L),\quad\textrm{where}\quad \omega_{\eps}:=\frac{\eps}{\eps_{r}}\omega_{r},$$
and that
$$\Omega=\omega\times(0,L),\quad\textrm{where}\quad \omega=\omega_1:=\frac{1}{\eps_{r}}\omega_{r}.$$
Therefore, $\omega_{\eps}$ and $\omega$ are nothing but the domains in $\reals^2$ obtained by homothetic rescaling of $\omega_{r}$ by, respectively, factors
$\eps/\eps_{r}$ and $1/\eps_r$.
Note for later use that the following relationships hold:
\begin{itemize}
\item $dv^\eps=\eps^2\,dv$, between the volume measures of $\Omega_{\eps}$ and $\Omega$;
\item $da^\eps=\eps^2\,da$, between the area measures of $\omega_{\eps}$ and $\omega$. 
\end{itemize}
%
%
%\vskip 12pt
%Let $\omega_{\eps}:=\frac{\eps}{\eps_{r}}\omega_{r}$
%be the domain obtained by rescaling  $\omega_{r}$ homothetically
%by a factor $\frac{\eps}{\eps_{r}}$, and let $\omega:=\psi_1=\frac{1}{\eps_{r}}\omega_{r}${\color{red}; moreover, let}
% $\Omega_{\eps}:=\omega_{\eps}\times(0,L)$ and $\Omega:=\psi_1=\omega\times(0,L)$.
%The coordinate transformation: 
%\[
%\cordm{\eps}:\Omega\to\Omega_{\eps},\quad\cordm{\eps}\left(x_{1},x_{2},x_{3}\right){\color{red}= x^\eps:=}\left(\eps x_{1},\eps x_{2},x_{3}\right),
%\]
%{\color{red}defines the typical element $\Omega_{\eps}$ of the domain sequence, in that $\Omega_{\eps}= \cordm{\eps}(\Omega)$; in particular,} $$\Omega_{\eps_{r}}=\cordm{\eps_{r}}\left(\Omega\right)=\Omega_{r}.$$
%%
%{\color{red}Note for later use that the following relationship holds between the volume measures of $\Omega_{\eps}$ and $\Omega$: $\,dx^\eps=\eps^2\,dx$.}
%%
%\subsection{Functional sequence}
\subsection{Displacement and strain fields} Given a displacement field $\modc u^{\eps}:\Omega_{\eps}\to\reals^{3}$,
we let the \emph{scaled displacement} $u^{\eps}:\Omega\to\reals^{3}$ be defined by
\begin{equation}\label{scadisp}
u^{\eps}(x):=R^{\eps}\,\modc u^{\eps}(R^\eps x);
\end{equation}
in Cartesian components, 
$$
u^\eps_1=\eps\, \modc u^{\eps}_1\circ R^\eps, \quad u^\eps_2=\eps\, \modc u^{\eps}_2\circ R^\eps,\quad u^\eps_3= \modc u^{\eps}_3\circ R^\eps.
$$
It
follows form definition \eqref{scadisp} that
\begin{equation}
\nabla u^{\eps}= R^{\eps}(\nabla\modc u^{\eps})R^{\eps},\label{eq:mod_grad_dis}
\end{equation}
and hence that
\begin{equation}
Eu^{\eps}=R^{\eps}(E\modc u^{\eps})R^{\eps};\label{eq:strain_coardinate_transformation}
\end{equation}
we call 
\begin{equation}
E^\eps u^{\eps}:=\left(R^{\eps}\right)^{-1}Eu^{\eps}\left(R^{\eps}\right)^{-1}\label{eq:str_eps(u_eps)}
\end{equation}
 the \emph{scaled strain}, and we record here its component form:
\begin{equation}
E^\eps u^{\eps}=
{\displaystyle \left(\begin{array}{ccc}
\displaystyle\frac{\left(\str u^{\eps}\right)_{11}}{\eps^{2}} & \displaystyle\frac{\left(\str u^{\eps}\right)_{12}}{\eps^{2}} &\displaystyle \frac{\left(\str u^{\eps}\right)_{13}}{\eps}\\[5pt]
\displaystyle\frac{\left(\str u^{\eps}\right)_{21}}{\eps^{2}} &\displaystyle \frac{\left(\str u^{\eps}\right)_{22}}{\eps^{2}} &\displaystyle \frac{\left(\str u^{\eps}\right)_{23}}{\eps}\\[5pt]
\displaystyle\frac{\left(\str u^{\eps}\right)_{31}}{\eps} & \displaystyle\frac{\left(\str u^{\eps}\right)_{32}}{\eps} &\displaystyle \left(Eu^{\eps}\right)_{33}
\end{array}\right)}.\label{eq:E_eps(u_eps)}
\end{equation}
\subsection{Elastic and load potentials, total potential}\quad{}\\

\noindent
(i) Let
\begin{equation}
\begin{split}\W^{\eps}(\str) & :=\frac{1}{2}\big[2\mu E_{\alpha\beta}E_{\alpha\beta}+\lambda\left(E_{\alpha\alpha}\right)^{2}+2\tau_{2}E_{33}E_{\alpha\alpha}\\
 & \hspace{3cm}+4\gamma\left(\frac{\eps}{\eps_{r}}\right)^{2}E_{\alpha3}E_{\alpha3}+\tau_{1}E_{33}^{2}\big]
\end{split}
\label{eq:W_eps}
\end{equation}
(note that $\W^{\eps_{r}}(\str)=\W(E)$). The \emph{elastic potential} is the functional 
%$\eef^{\eps}:H^{1}\left(\Omega;\reals^{3}\right)\to\reals$ defined as follows: 
%
\begin{equation}
\begin{split}
\eef^{\eps}\left(u^{\eps}\right) &:=\int_{\Omega}\W^{\eps}\left(\str^{\eps}u^{\eps}\right)dv\\ & \hspace{1.5cm}+\frac{1}{2}\tau_{R}\int_{\Omega}\left(\frac{\eps-\eps_{r}}{\eps}\right)^{2}\sum_{\alpha,\beta}\left(u_{3,\alpha\beta}^{\eps}+u_{\alpha,3\beta}^{\eps}\right)^{2}dv,\;\;\tau_{R}>0.\label{eq:els_eng_eps}
\end{split}
\end{equation}
defined over the collection of all elements  $v\in H^{1}\left(\Omega;\reals^{3}\right)$ such that $(v_{3,\alpha\beta}+v_{\alpha,3\beta})\in L^{2}\left(\Omega\right)$ (cf. definition \eqref{eq:admisible_disp_(A)}).
%{\color{blue}Transforming the  above integration to the $\Omega_\eps$ domain  we may define the functional $\modc \eef^\eps$ by setting $\modc \eef ^\eps(\modc u^\eps):=\eef^\eps(u^\eps)$, and by \eqref{eq:str_eps(u_eps)}  we obtain 
%
%\begin{equation}
%\begin{split} \modc{\eef}^{\eps}\left(\modc u^{\eps}\right)= & \frac{1}{\eps^2} \int_{\Omega_\eps}\W^{\eps}\left(\str \modc u^{\eps}\right)d\modc x^\eps \\   
%& +\frac{1}{2}\tau_{R}\int_{\Omega_\eps}\left(\eps-\eps_{r}\right)^{2}\sum_{\alpha,\beta}\left(\modc u_{3,\alpha\beta}^{\eps}+\modc u_{\alpha,3\beta}^{\eps}\right)^{2}d\modc x^\eps.
%\end{split} \end{equation}
%} 
%

%\subsection{Load potential} 

\noindent (ii) For $\modc b^{r}:\Omega_{r}\to\reals^{3}$ and $\modc c^{r}:\bnd_{N}\Omega_{r}\to\reals^{3}$,
the real distance and contact loads introduced in Section \ref{sec:The_real_problem}, we let
\[
\begin{split}b^{r} (x) & :=\left(R^{\eps_{r}}\right)^{-1}\modc b^{r} (R^{\eps_r}x)
\end{split}
\]
and
\[
c^{r}(x)=\begin{cases}
\left(R^{\eps_{r}}\right)^{-1}\modc c^{r}(R^{\eps_r}x) & \;\text{on }\omega\times\{L\},\\
\frac{1}{\eps_{r}}\left(R^{\eps_{r}}\right)^{-1}\modc c^{r}(R^{\eps_r}x) & \;\text{on }\bnd\omega\times(0,L)
\end{cases}
\]
be the \emph{scaled loads}, and we assume that $b^{r}\in L^{2}\left(\Omega;\reals^{3}\right)$
and $c^{r}\in L^{2}\left(\bnd_{N}\Omega;\reals^{3}\right)$. The \emph{load
potential} is the functional $\lp:H^{1}\left(\Omega;\reals^{3}\right)\to\reals$ defined by
\begin{equation}
\lp\left(u^{\eps}\right):=\int_{\Omega}b^{r}\cdot u^{\eps}dv+\int_{\bnd_{N}\Omega}c^{r}\cdot u^{\eps}da.\label{eq:Load_potential_eps}
\end{equation}
%{\color{blue}The loads defined on the $\Omega_\eps$  are given by 
%\[
%\begin{split}\modc b^{\eps} & :=R^{\eps} b^{r}\circ (\cordm{\eps})^{-1} \end{split}
%\]
%and
%\[
%\modc c^{\eps}=\begin{cases}
%R^{\eps}  c^{r}\circ(\cordm{\eps})^{-1}, & ,\quad\text{on }\omega_\eps \times\{L\}\\
%\eps R^{\eps} c^{r}\circ(\cordm{\eps})^{-1} & ,\quad\text{on }\bnd\omega_\eps\times(0,L)
%\end{cases}.
%\]
%Defining the functional $\modc \lp^\eps$ by setting $\modc \lp^\eps(\modc u^\eps):=\lp^\eps(u^\eps)$  we obtain
%\begin{equation}
%\modc \lp\left(\modc u^{\eps}\right)=\frac{1}{\eps^2}\int_{\Omega_\eps}\modc b^{\eps}\cdot \modc u^{\eps}d\modc x^{\eps}+\frac{1}{\eps^2}\int_{\bnd_{N}\Omega_\eps}\modc c^{\eps}\cdot \modc u^{\eps}d\modc a^\eps.
%\end{equation}
%}
%\subsection{Problem sequence}
%
(iii) the \emph{total potential} is 
\begin{equation}
\tp^{\eps}\left(u^{\eps}\right):=\eef^{\eps}\left(u^{\eps}\right)-\lp\left(u^{\eps}\right).\label{eq:total_potential_eps}
\end{equation}
%{\color{blue} Upon transforming to a fixed domain we define the functional $\modc \tp^\eps$ by 
%\begin{equation}
%\begin{split}\modc \tp^{\eps}\left(\modc u^{\eps}\right)&:=\modc \eef^{\eps}\left(\modc u^{\eps}\right)-\modc \lp\left(\modc u^{\eps}\right)\\
%& = \frac{1}{\eps^2} \left( \int_{\Omega_\eps}\W^{\eps}\left(\str \modc u^{\eps}\right)d\modc x^\eps+\int_{\Omega_\eps}\modc b^{\eps}\cdot \modc u^{\eps}d\modc x^{\eps}+\int_{\bnd_{N}\Omega_\eps}\modc c^{\eps}\cdot \modc u^{\eps}d\modc a^\eps\right)\\
%&+\frac{1}{2}\tau_{R}\int_{\Omega_\eps}\left(\eps-\eps_{r}\right)^{2}\sum_{\alpha,\beta}\left(\modc u_{3,\alpha\beta}^{\eps}+\modc u_{\alpha,3\beta}^{\eps}\right)^{2}d\modc x^\eps.
%\end{split}\end{equation}
% }

\smallskip
\noindent
{\bf Remark 1.}
 We close this section by observing that energies  $\tp^{r}$  and $\tp^{\eps_{r}}$ coincide, to within a multiplicative constant. Precisely, with the use of the above data and solution scalings, it can be shown that

\begin{equation}\label{Pir}
\tp^{r}\left(\modc u^{\eps_{r}}\right) =\eps_{r}^{2}\,\tp^{\eps_{r}}\left(u^{\eps_{r}}\right).
\end{equation}
Since the multiplicative constant $\eps_r^2$ does not affect the minimization process, we deduce that, up to a change of variables, the minimizers of $\tp^{\eps_{r}}$ and $\tp^{r}$ coincide.

\section{Compactness\label{sec:Compactness}}

In this Section we present a compactness lemma for a sequence $\left\{u^{\eps}\right\}$
with uniformly bounded energy. The functional $\tp^{\eps}$ is well-defined on 
\begin{equation}
\ad:=\left\{ v\in H^{1}\left(\Omega;\reals^{3}\right)\mid v=0\mbox{ on } \omega\times \{x_{3}=0\}\;\text{and}\; (v_{3,\alpha\beta}+v_{\alpha,3\beta})\in L^{2}\left(\Omega\right)\right\};\label{eq:admisible_disp_(A)}
\end{equation}
we extend $\tp^{\eps}$ to all of $L^{2}\left(\Omega;\reals^{3}\right)$, without renaming it, as follows:
\[
\tp^{\eps}\left(u\right):=\begin{cases}
\tp^{\eps}\left(u\right) & \textrm{for}\; u\in\ad,\\
\infty & \textrm{for}\; u\in L^2(\Omega;\reals^3)\setminus \ad.
\end{cases}
\]
Moreover, for
\[
H_{D}^{1}\left(0,L\right)=\left\{ v\in H^{1}\left(0,L\right)\mid v(0)=0\right\} ,
\]
we let 
\begin{equation}
\begin{split}\TD:= & \left\{ u\in H^{1}\left(\Omega;\reals^{3}\right)\mid\exists\; u_{i}^{0},\,\psi_{\alpha}\in H_{D}^{1}\left(0,L\right),\right.\\
 &\quad \left.u=u_{1}^{0}(x_{3})e_{1}+u_{2}^{0}(x_{3})e_{2}+\left(u_{3}^{0}(x_{3})+x_{2}\,\psi_1(x_{3})-x_{1}\,\psi_2(x_{3})\right)e_{3}\right\} ,
\end{split}
\label{eq:admisible_disp_Timo}
\end{equation}
We view $\ad$ as the set of admissible displacements
of a beam-like body, and $\TD\subset\ad$ as the subset of all
displacements compatible with Timoshenko's kinematic \emph{Ansatz}. 
\begin{lem}[Compactness lemma]
\label{lem:compactnes} Let a given sequence $\left\{ u^{\eps}\right\} \subset\ad$ be
such that 
\begin{equation}
\sup_{\eps}\tp^{\eps}\left(u^{\eps}\right)<\infty.\label{eq:bounded_eef}
\end{equation}
Then, there are a subsequence $\left\{ u^{\eps}\right\} $ (not
relabeled) and an element $u$ of $\TD$ such that $u^{\eps}\rightharpoonup u$ in $H^{1}\left(\Omega;\reals^{3}\right)$. Moreover, 
\[
\left(Eu^{\eps}\right)_{\alpha\beta}\to0\;\;\text{in}\: L^{2}\left(\Omega\right).
\]
\end{lem}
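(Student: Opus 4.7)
The plan is to carry out the proof in three stages, exploiting the $\eps$-weighted structure of $E^\eps u^\eps$ visible in \eqref{eq:E_eps(u_eps)} together with the second-order penalty in \eqref{eq:els_eng_eps}.

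\emph{Stage 1 (Uniform bounds and weak compactness).} Assumptions \eqref{eq:material_bounds} make $W$ positive definite, and the only change from $W$ to $W^\eps$ is that the shear coefficient $2\gamma$ is multiplied by $(\eps/\eps_r)^2>0$. Reading off \eqref{eq:E_eps(u_eps)} component-by-component, I expect an estimate of the form (with $C>0$ independent of $\eps$)
\begin{equation*}
\int_\Omega W^\eps(E^\eps u^\eps)\,dv \;\geq\; C\!\int_\Omega\!\left[\frac{|(Eu^\eps)_{\alpha\beta}|^2}{\eps^4}+|(Eu^\eps)_{\alpha 3}|^2+|(Eu^\eps)_{33}|^2\right]\!dv \;\geq\; C\,\|Eu^\eps\|_{L^2(\Omega)}^2.
\end{equation*}
Korn's and Poincar\'e's inequalities on the fixed Lipschitz domain $\Omega$, together with the clamping on $\omega\times\{0\}$, then give $\|u^\eps\|_{H^1}\leq C\sqrt{\eef^\eps(u^\eps)}$. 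Since $|\lp(u^\eps)|\leq C\|u^\eps\|_{H^1}$ with $C$ depending only on the fixed $L^2$-norms of $b^r$ and $c^r$, from $\sup_\eps\tp^\eps(u^\eps)<\infty$ one derives $\eef^\eps(u^\eps)\leq C+C\sqrt{\eef^\eps(u^\eps)}$, which closes to yield $\sup_\eps\eef^\eps(u^\eps)<\infty$ and $\sup_\eps\|u^\eps\|_{H^1}<\infty$. Along a subsequence, $u^\eps\rightharpoonup u$ in $H^1(\Omega;\reals^3)$, with $u=0$ on $\omega\times\{0\}$ by trace continuity.

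\emph{Stage 2 (Vanishing in-plane strain).} The first line of the displayed inequality above immediately gives $\|(Eu^\eps)_{\alpha\beta}\|_{L^2}^2\leq C\eps^4$, so $(Eu^\eps)_{\alpha\beta}\to 0$ strongly in $L^2(\Omega)$, which is the last claim of the lemma. Passing to the weak limit then yields $(Eu)_{\alpha\beta}=0$ a.e.\ in $\Omega$, i.e.\ the first system in \eqref{TimoPDE}.

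\emph{Stage 3 (Second-order constraint and identification of the limit).} The additional penalty in \eqref{eq:els_eng_eps} supplies $\big((\eps-\eps_r)/\eps\big)^2\,\|u^\eps_{3,\alpha\beta}+u^\eps_{\alpha,3\beta}\|_{L^2}^2\leq C$. Since $(\eps-\eps_r)/\eps\to -\infty$ as $\eps\to 0$, it follows that $u^\eps_{3,\alpha\beta}+u^\eps_{\alpha,3\beta}\to 0$ in $L^2(\Omega)$. Combined with $u^\eps\rightharpoonup u$ in $H^1$, passing to the limit in the distributional sense yields $u_{3,\alpha\beta}+u_{\alpha,3\beta}=0$, i.e.\ the second system in \eqref{TimoPDE}. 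With both PDE systems satisfied by $u$ and the clamping at $x_3=0$, the characterization of \cite{Lembo2001} forces $u$ to have the Timoshenko form \eqref{eq:Timoshenko_dis}; the profiles $u_i^0,\psi_\alpha$ are recovered by cross-sectional first-moment integrations, and $u\in H^1(\Omega;\reals^3)$ together with $u|_{x_3=0}=0$ gives $u_i^0,\psi_\alpha\in H^1_D(0,L)$, so $u\in\TD$.

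The most delicate part is Stage 3: while Stages 1 and 2 reduce to bookkeeping of the coercivity and the $\eps$-scalings recorded in \eqref{eq:W_eps}--\eqref{eq:E_eps(u_eps)}, extracting the Timoshenko profiles with the correct $H^1_D$ regularity from the two pointwise PDE constraints, and justifying distributional convergence of the mixed second-order combinations $u^\eps_{3,\alpha\beta}+u^\eps_{\alpha,3\beta}$ to those of $u$, are where one must invoke \cite{Lembo2001} and exercise some care with cross-sectional averages.
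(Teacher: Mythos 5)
Your proof is correct and follows essentially the same route as the paper's: coercivity of $\W^{\eps}$ together with Korn's inequality under the clamping condition yields the uniform $H^{1}$ bound, the $\eps^{-4}$ and $\eps^{-2}$ weightings force $\left(Eu^{\eps}\right)_{\alpha\beta}\to0$ and $u_{3,\alpha\beta}^{\eps}+u_{\alpha,3\beta}^{\eps}\to0$ in $L^{2}$, and passing these to the weak limit identifies $u$ as a Timoshenko field via the characterization of \cite{Lembo2001}. The only cosmetic difference is that you close the a priori estimate through the inequality $x\le C+C\sqrt{x}$, whereas the paper absorbs the load term by completing the square in $\left\Vert u^{\eps}\right\Vert _{H^{1}(\Omega)}$.
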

\begin{proof}
By means of H\"{o}lder's inequality,  (\ref{eq:material_bounds}),
the trace theorem for $H^{1}$-functions, and Korn's inequality, which
holds because $u^{\eps}=0$ on $\omega\times \{x_{3}=0\}$, we find: 
\begin{eqnarray*}
\tp^{\eps}\left(u^{\eps}\right) \!\!\!\!& \geq &\!\!\!\! C\sum_{\alpha,\beta}\int_{\Omega}\left(\left(\str^{\eps}u^{\eps}\right)_{\alpha\beta}^{2}+\eps^{2}\left(\str^{\eps}u^{\eps}\right)_{\alpha3}^{2}+\left(\str^{\eps}u^{\eps}\right)_{33}^{2}+\frac{1}{\eps^{2}}\left(u_{3,\alpha\beta}^{\eps}+u_{\alpha,3\beta}^{\eps}\right)^{2}\right)dx\\
 &  & \hspace{3cm}-\left\Vert b^{r}\right\Vert _{L^{2}(\Omega)}\left\Vert u^{\eps}\right\Vert _{L^{2}(\Omega)}-\left\Vert c^{r}\right\Vert _{L^{2}\left(\bnd_{D}\Omega\right)}\left\Vert u^{\eps}\right\Vert _{L^{2}\left(\bnd_{D}\Omega\right)},\\
 & \geq &\!\!\!\! C_{1}\left\Vert \str u^{\eps}\right\Vert _{L^{2}(\Omega)}^{2}-C_{2}\left\Vert u^{\eps}\right\Vert _{H^{1}(\Omega)},\\
 & \geq &\!\!\!\! C_{3}\left\Vert u^{\eps}\right\Vert _{H^{1}(\Omega)}^{2}-C_{2}\left\Vert u^{\eps}\right\Vert _{H^{1}(\Omega)},\\
 & \geq &\!\!\!\! C_{4}\left\Vert u^{\eps}\right\Vert _{H^{1}(\Omega)}^{2}-C_{5}.
\end{eqnarray*}
It follows from assumption (\ref{eq:bounded_eef}) that $\left\{ u^{\eps}\right\} $
is a bounded sequence in $H^{1}\left(\Omega;\reals^{3}\right)$. Moreover, from the first line of the inequalities above we deduce that all sequences

\[
\left\{\frac{\left(Eu^{\eps}\right)_{\alpha\beta}}{\eps^{2}}\right\},\;\left\{\left(Eu^{\eps}\right)_{\alpha3}\right\},\;\left\{\left(Eu^{\eps}\right)_{33}\right\},\;\left\{\frac{u_{3,\alpha\beta}^{\eps}+u_{\alpha,3\beta}^{\eps}}{\eps}\right\},
\]
are bounded in $L^{2}(\Omega)$. Hence, up to a subsequence,
$u^{\eps}\rightharpoonup u$ in $H^{1}\left(\Omega;\reals^{3}\right)$
for some $u\in H^{1}\left(\Omega;\reals^{3}\right)$. In addition, from the fact that $\left\{ \frac{\left(Eu^{\eps}\right)_{\alpha\beta}}{\eps^{2}}\right\} $
is a bounded sequence in $L^{2}\left(\Omega\right)$, it follows that
$u_{\alpha,\beta}^{\eps}+u_{\beta,\alpha}^{\eps}=2\left(Eu^{\eps}\right)_{\alpha\beta}\to0$
in $L^{2}\left(\Omega\right)$; and,  $u_{\alpha,\beta}^{\eps}+u_{\beta,\alpha}^{\eps}\rightharpoonup u_{\alpha,\beta}+u_{\beta,\alpha}$
in $L^{2}\left(\Omega\right)$; we conclude that $u_{\alpha,\beta}+u_{\beta,\alpha}=0$. Finally, with a similar argument, we can show that $u_{3,\alpha\beta}^{\eps}+u_{\alpha,3\beta}^{\eps}\rightharpoonup u_{3,\alpha\beta}+u_{\alpha,3\beta}=0$
in $L^{2}\left(\Omega\right)$. Thence, $u\in\TD$. 
\end{proof}

\section{The $\Gamma$-limit of $\{\protect\tp^{\protect\eps}\}$\label{sec:The-G-limit}}

In this section we analyze the $\Gamma$-limit of the sequence of
functionals $\tp^{\eps}$. For 
\begin{equation}
\begin{split}\W_{\tau}^{\eps}\left(E_{i3}\right) & :=\min_{g_{\alpha\beta}}\big\{ \W^{\eps}\big(\sum_{\alpha}\str_{\alpha3}\left(e_{\alpha}\otimes e_{3}+e_{3}\otimes e_{\alpha}\right)+E_{33}e_{3}\otimes e_{3}\\
 & \hspace{4,5cm}+\sum_{a,\beta}\frac{g_{\alpha\beta}}{2}\left[e_{\alpha}\otimes e_{\beta}+e_{\beta}\otimes e_{\alpha}\right]\big)\big\} ,
\end{split}
\label{eq:W_tau_eps}
\end{equation}
a simple calculation shows that 
\begin{equation}
\W_{\tau}^{\eps}\left(E_{i3}\right)=\frac{1}{2}[4\gamma(\frac{\eps}{\eps_{r}})^{2}(E_{13}^{2}+E_{23}^{2})+(\tau_{1}-\frac{\tau_{2}^{2}}{\lambda+\mu})E_{33}^{2}].\label{eq:W_eps_tau}
\end{equation}
Define 
\begin{equation}
\W_{\tau}(E_{i3}):=\W_{\tau}^{1}(E_{i3})=\frac{1}{2}[4\gamma\,\eps_{r}^{-2}(E_{13}^{2}+E_{23}^{2})+(\tau_{1}-\frac{\tau_{2}^{2}}{\lambda+\mu})E_{33}^{2}],\label{eq:W_tau_eps_tau}
\end{equation}
so that 
\begin{equation}
\W_{\tau}^{\eps}(E_{i3})=\W_{\tau}(\eps E_{13},\eps E_{23},E_{33}).\label{eq:W_eps_tau_tau}
\end{equation}
Define $\eef_{\tau}:L^{2}\left(\Omega;\reals^{3}\right)\to\reals$
by 
\begin{equation}
\eef_{\tau}(u):=\begin{cases}
\int_{\Omega}\W_{\tau}\left(\left(Eu\right)_{i3}\right)dx, & u\in\TD\\
\infty, & \text{otherwise }
\end{cases},\label{eq:en_W_eps_tau_tau}
\end{equation}
and $\tp:L^{2}\left(\Omega;\reals^{3}\right)\to\reals$ by 
\begin{equation}\label{pottot}
\tp\left(u\right):=\begin{cases}
\eef_{\tau}\left(u\right)-\lp\left(u\right), & u\in\TD\\
\infty, & \text{otherwise}
\end{cases}.
\end{equation}

\begin{thm}
The sequence $\{\tp^{\eps}\}$ $\Gamma$-converges to $\Pi$ in the $L^{2}\left(\Omega;\reals^{3}\right)$-topology,
that is to say,\label{thm:gamma_conver}

$(i)$~(liminf~inequality) for every sequence $\left\{ u^{\eps}\right\} \subset L^{2}\left(\Omega;\reals^{3}\right)$
and for every $u\in L^{2}\left(\Omega;\reals^{3}\right)$ such that $u^{\eps}\to u$ in $L^{2}\left(\Omega;\reals^{3}\right)$,
%in $L^{2}\left(\Omega;\reals^{3}\right)$ 
\begin{equation}
\liminf_{\eps\to0}\tp^{\eps}(u^{\eps})\geq\tp(u);\label{eq:lim_inf_in}
\end{equation}

$(ii)$~(recovery~sequence) for every $u\in L^{2}\left(\Omega;\reals^{3}\right)$,
there is a sequence $\left\{ u^{\eps}\right\} \subset L^{2}\left(\Omega;\reals^{3}\right)$
such that $u^{\eps}\to u$ in $L^{2}\left(\Omega;\reals^{3}\right)$ and %in $L^{2}\left(\Omega;\reals^{3}\right)$
\begin{equation}
\limsup_{\eps\to0}\tp^{\eps}(u^{\eps})\leq\tp(u).\label{eq:lim_sup_in}
\end{equation}
\end{thm}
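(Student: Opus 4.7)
My plan is to follow the standard two-step $\Gamma$-convergence template: for (i) I combine the compactness lemma with weak lower semicontinuity, after pointwise minorization of $\W^\eps$ by the partially-minimized density $\W_\tau^\eps$; for (ii) I construct an explicit corrector that realizes the optimal cross-section strain associated with the minimization in \eqref{eq:W_tau_eps}.

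\textbf{Liminf inequality.} If $\liminf_{\eps\to 0}\tp^\eps(u^\eps)=+\infty$ there is nothing to prove, so I pass to a subsequence along which $\tp^\eps(u^\eps)$ converges to a finite value and hence has uniformly bounded energy. Lemma~\ref{lem:compactnes} then produces, up to further extraction, a weak $H^1$-limit $\tilde u\in\TD$; strong $L^2$-convergence forces $\tilde u=u$, so in particular $u\in\TD$. From the defining minimization \eqref{eq:W_tau_eps} one has the pointwise bound $\W^\eps(E)\geq \W_\tau^\eps(E_{i3})$; combining this with the scaling identity \eqref{eq:W_eps_tau_tau}, the relation $(E^\eps u^\eps)_{\alpha3}=(Eu^\eps)_{\alpha3}/\eps$ read off from \eqref{eq:E_eps(u_eps)}, and the fact that the hyperstress addendum in \eqref{eq:els_eng_eps} is non-negative, I obtain
\[
\eef^\eps(u^\eps)\ \geq\ \int_{\Omega}\W_\tau\big((Eu^\eps)_{13},(Eu^\eps)_{23},(Eu^\eps)_{33}\big)\,dv.
\]
Since $\W_\tau$ is a convex non-negative quadratic form, the weak $L^2$-convergence $Eu^\eps\rightharpoonup Eu$ yields $\liminf \eef^\eps(u^\eps)\geq\eef_\tau(u)$ by classical weak lower semicontinuity, while the $L^2$-continuity of $\lp$ gives $\lp(u^\eps)\to\lp(u)$; subtracting establishes \eqref{eq:lim_inf_in}.

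\textbf{Recovery sequence.} If $u\notin\TD$ the choice $u^\eps=u$ trivially works. For $u\in\TD$, I first assume its generators $u_i^0,\psi_\alpha$ are smooth, pick a cut-off $\eta_\delta\in C^\infty(0,L)$ vanishing on $[0,\delta]$ and equal to $1$ on $[2\delta,L]$, and set $u^\eps:=u+\eps^2\eta_\delta\phi$ with $\phi=(\phi_1,\phi_2,0)$, where $\phi_\alpha$ are polynomials in $(x_1,x_2)$ with smooth $x_3$-coefficients solving
\[
\tfrac12(\phi_{\alpha,\beta}+\phi_{\beta,\alpha}) = g^\ast_{\alpha\beta},\qquad g^\ast_{\alpha\beta}(x):=-\frac{\tau_2\,(Eu)_{33}(x)}{2(\lambda+\mu)}\,\delta_{\alpha\beta},
\]
namely the optimal (Poisson-type) cross-section strain achieving the minimum in \eqref{eq:W_tau_eps}. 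Since $(Eu)_{33}$ is affine in $(x_1,x_2)$, the Saint-Venant compatibility for $g^\ast$ is trivially satisfied and $\phi$ has an explicit polynomial form; the cut-off guarantees the clamped condition on $\omega\times\{0\}$. For fixed $\delta$, as $\eps\to 0$ the scaled strains converge in $L^2$ to their optimal limits ($(E^\eps u^\eps)_{\alpha\beta}\to\eta_\delta g^\ast_{\alpha\beta}$, $(Eu^\eps)_{\alpha3}\to(Eu)_{\alpha3}$, $(Eu^\eps)_{33}=(Eu)_{33}$); substituting into \eqref{eq:W_eps} and regrouping exactly as in the computation producing \eqref{eq:W_eps_tau} shows that the elastic integral converges to a quantity which, as $\delta\to 0$, tends to $\eef_\tau(u)$. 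The hyperstress term vanishes because on $\TD$ one has $u_{3,\alpha\beta}=u_{\alpha,3\beta}=0$, so only the $\eps^2\eta_\delta\phi$-contribution remains, and its prefactor $((\eps-\eps_r)/\eps)^2\cdot\eps^4=\eps^2(\eps-\eps_r)^2$ tends to $0$.

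\textbf{Main obstacle.} The delicate point I expect is the double limit $\eps\to 0$, $\delta\to 0$: the cut-off introduces in the hyperstress penalty an extra term of size $\eps^2(\eps-\eps_r)^2\|\eta_\delta'\phi_{\alpha,\beta}\|_{L^2}^2=O(\eps^2/\delta)$, together with a small elastic error of order $\delta$, so a diagonal selection $\delta=\delta(\eps)\to 0$ at an intermediate rate is required to recover the matching upper bound. The extension from smooth $u$ to arbitrary $u\in\TD$ is then obtained by approximating the generators $(u_i^0,\psi_\alpha)$ in the $H^1$-topology by smooth fields, using continuity of $\eef_\tau$ and $\lp$ in that topology to get $\tp(u_n)\to\tp(u)$, and performing a further diagonal extraction. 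Structurally the entire argument is a bookkeeping of orders in $\eps$: the $\eps^2$-scale of the corrector is exactly compensated by the $\eps^{-2}$-blow-up of the hyperstress prefactor, which is precisely what makes the second-order penalty in \eqref{eq:els_eng_eps} soft enough to disappear in the $\Gamma$-limit.
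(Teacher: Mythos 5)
Your proposal is correct and follows essentially the same route as the paper: the same pointwise minorization $\W^{\eps}(E)\geq\W_{\tau}^{\eps}(E_{i3})$ combined with the scaling identity \eqref{eq:W_eps_tau_tau}, convexity of $\W_{\tau}$, and Lemma \ref{lem:compactnes} for the liminf inequality, and the same order-$\eps^{2}$ Poisson-effect corrector for the recovery sequence (the paper's $\hat{u}$ in \eqref{eq:Recovery_seq}, with $\eta=\tau_{2}/2(\mu+\lambda)$, is exactly your $\phi$ realizing $g^{\ast}_{\alpha\beta}$). The only minor technical difference is that the paper enforces the clamping condition by restricting first to generators $u_{i}^{0},\psi_{\alpha}$ that vanish in a neighborhood of $x_{3}=0$ (a dense subset of $H_{D}^{1}(0,L)$) and then diagonalizing over the smooth approximants, rather than introducing your cut-off $\eta_{\delta}$ with its attendant $O(\eps^{2}/\delta)$ bookkeeping; both devices lead to the same diagonal extraction at the end.
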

\begin{proof}
We start by proving $(i)$. Let $\left\{ u^{\eps}\right\} \subset L^{2}\left(\Omega;\reals^{3}\right)$
and $u\in L^{2}\left(\Omega;\reals^{3}\right)$ such that $u^{\eps}\to u$
in $L^{2}\left(\Omega;\reals^{3}\right)$. The inequality in 
(\ref{eq:lim_inf_in}) is nontrivial only for 
\[
\liminf_{\eps\to0}\tp^{\eps}\left(u^{\eps}\right)<\infty.
\]
Thus, (by passing, if needed, to a subsequence) we may assume that
\[
\lim_{\eps\to0}\tp^{\eps}\left(u^{\eps}\right)=\liminf_{\eps\to0}\tp^{\eps}\left(u^{\eps}\right)<\infty.
\]
By Lemma \ref{lem:compactnes}, the sequence $\left\{ u^{\eps}\right\} $
converges weakly in $H^{1}\left(\Omega;\reals^{3}\right)$ to an element
$u\in\TD$. Thus, 
\begin{eqnarray*}
\liminf_{\eps\to0}\tp^{\eps}(u^{\eps}) & \geq & \liminf_{\eps\to0}\{ \int_{\Omega}\W^{\eps}\left(\str^{\eps}u^{\eps}\right)dx-\lp\left(u^{\eps}\right)\} ,\\
 & \geq & \liminf_{\eps\to0}\{ \int_{\Omega}\W_{\tau}^{\eps}\left(\left(\str^{\eps}u^{\eps}\right)_{i3}\right)dx-\lp\left(u^{\eps}\right)\} ,\\
 & = & \liminf_{\eps\to0}\{ \int_{\Omega}\W_{\tau}\left(\eps\left(\str^{\eps}u^{\eps}\right)_{13},\eps\left(\str^{\eps}u^{\eps}\right)_{23},\left(\str^{\eps}u^{\eps}\right)_{33}\right)dx-\lp\left(u^{\eps}\right)\} ,\\
 & = & \liminf_{\eps\to0}\{ \int_{\Omega}\W_{\tau}\left(\left(\str u^{\eps}\right)_{13},\left(\str u^{\eps}\right)_{23},\left(\str u^{\eps}\right)_{33}\right)dx-\lp\left(u^{\eps}\right)\} ,\\
 & \geq & \int_{\Omega}\W_{\tau}\left(\left(\str u\right)_{13},\left(\str u\right)_{23},\left(\str u\right)_{33}\right)dx-\lp\left(u\right),\\
 & = & \eef_{\tau}(u)-\lp\left(u\right)=\tp(u).
\end{eqnarray*}
We point out that: in the first inequality we dispensed with second-order terms;
the second inequality makes use of  (\ref{eq:W_tau_eps});
in the third line we applied  (\ref{eq:W_eps_tau_tau}); the
last inequality follows from the convexity of $\W_{\tau}$ and the
continuity of $\lp$. 

We now prove $(ii)$. Let $u\in L^{2}\left(\Omega;\reals^{3}\right)$.
Since inequality (\ref{eq:lim_sup_in}) is trivial for  $u\notin\TD$, 
we only need consider the case when $u\in\TD$, and hence has the representation specified in \eqref{eq:admisible_disp_Timo}.
%, when--we recall--there are
%$u_{i}^{0},\:\omega_{\alpha}\in H_{D}^{1}(0,L)$ for which 
%\[
%u=u_{1}^{0}(x_{3})e_{1}+u_{2}^{0}(x_{3})e_{2}+\left(u_{3}^{0}(x_{3})+x_{2}\psi_1(x_{3})-x_{1}\psi_2(x_{3})\right)e_{3}.
%\]

At first, we restrict attention to functions $u_{i}^{0},\: \psi_{\alpha}$ belonging to $C^{\infty}(0,L)$
and equal to zero in a neighborhood of $x_3=0$; note that these functions form a
dense subset of $H_{D}^{1}\left(0,L\right)$. We consider the sequence whose typical term is:
\[
u^{\eps}:=u+\eps^{2}\hat{u},
\]
where $\hat{u}$ is defined by  
\begin{eqnarray}
\hat{u}_{1} & := & -\eta(x_{1}u_{3,3}^{0}+x_{1}x_{2}\psi_{1,3}-\frac{x_{1}^{2}}{2}\psi_{2,3}+\frac{x_{2}^{2}}{2}\psi_{2,3}),\nonumber \\
\hat{u}_{2} & := & -\eta(x_{2}u_{3,3}^{0}-x_{2}x_{1}\psi_{2,3}+\frac{x_{2}^{2}}{2}\psi_{1,3}-\frac{x_{1}^{2}}{2}\psi_{1,3}),\label{eq:Recovery_seq}\\
\hat{u}_{3} & := & 0,\nonumber 
\end{eqnarray}
with %$\eta:=\frac{\tau_{2}}{2\left(\mu+\lambda\right)}$ 
$\eta:=\tau_{2}/2\left(\mu+\lambda\right)$. A simple computation yields:
\begin{eqnarray*}
\left(\str^{\eps}u^{\eps}\right){}_{11} & = & \left(\str\hat{u}\right){}_{11}=-\eta\left(\str u\right){}_{33},\\
\left(\str^{\eps}u^{\eps}\right)_{12} & = & 0,\\
\left(\str^{\eps}u^{\eps}\right)_{22} & = & \left(\str\hat{u}\right){}_{22}=-\eta\left(\str u\right){}_{33},\\
\left(\str^{\eps}u^{\eps}\right)_{\alpha3} & = & \left(\str u\right){}_{\alpha3}+\eps^{2}\left(\str\hat{u}\right){}_{\alpha3},\\
\left(\str^{\eps}u^{\eps}\right)_{33} & = & \left(\str u\right){}_{33}.
\end{eqnarray*}
Thus, 
\begin{eqnarray*}
\eef^{\eps}\left(u^{\eps}\right) \!\!\!\!& = &\!\!\!\! \int_{\Omega}\frac{1}{2}\big[4\mu\eta^{2}\left(\str u\right){}_{33}^{2}+4\lambda\eta^{2}\left(\str u\right){}_{33}^{2}+2\tau_{2}\left(Eu\right){}_{33}\left(-2\eta\left(\str u\right){}_{33}\right)\\
 &  & \hspace{2,5cm}+\tau_{1}\left(Eu\right){}_{33}^{2}+(\frac{\eps}{\eps_{r}})^{2}4\gamma(\frac{\left(\str u\right){}_{\alpha3}+\eps^{2}(\str\hat{u}){}_{\alpha3}}{\eps})^{2}\\
 &  & \hspace{4cm}+\sum_{\alpha,\beta}(\frac{\eps_{r}-\eps}{\eps})^{2}\eps^{4}(\hat{u}_{3,\alpha\beta}+\hat{u}_{\alpha,3\beta})^{2}\big]dx.
\end{eqnarray*}
Upon rearranging. we obtain:
\begin{eqnarray*}
\eef^{\eps}\left(u^{\eps}\right) \!\!\!\! & = & \!\!\!\! \int_{\Omega}\frac{1}{2}\big[\big(\tau_{1}-\frac{\tau_{2}^{2}}{\mu+\lambda}\big)\left(Eu\right){}_{33}^{2}+\big(\frac{1}{\eps_{r}}\big)^{2}4\gamma\left(\left(\str u\right){}_{\alpha3}+\eps^{2}\left(\str\hat{u}\right){}_{\alpha3}\right)^{2}\\
 &  & \hspace{4cm}+\sum_{\alpha,\beta}\left(\eps_{r}-\eps\right)^{2}\eps^{2}\left(\hat{u}_{3,\alpha\beta}+\hat{u}_{\alpha,3\beta}\right)^{2}\big]\,dx.
\end{eqnarray*}
By passing to the limit, it follows that 
\[
\lim_{\eps\to0}\eef^{\eps}(u^{\eps})=\int_{\Omega}\frac{1}{2}[(\frac{1}{\eps_{r}})^{2}4\gamma\left(\str u\right){}_{\alpha3}^{2}+(\tau_{1}-\frac{\tau_{2}^{2}}{\mu+\lambda})(Eu)_{33}^{2}]dx=\eef_{\tau}(u);
\]
since
\[
\lim_{\eps\to0}\lp\left(u^{\eps}\right)=\lp(u),
\]
the proof of $(ii)$ is achieved, under the smoothness assumptions stated at the beginning of this paragraph. 
%that $u_{i}^{0},\:\omega_{\alpha}\in C^{\infty}(0,L)$ and equal to zero in a neighborhood of $0$. 

The general $u\in\TD$
case is handled by a standard diagonalization argument. Indeed, let $\{u_{k}\}\subset\TD\cap C^{\infty}\left(\Omega;\reals^{3}\right)$
such that $u_{k}\to u$ in $H^{1}\left(\Omega;\reals^{3}\right)$; moreover, let $\{u_{k}^{\eps}\}$ be the recovery sequence for $u_{k}$
as defined by (\ref{eq:Recovery_seq}). Since 
$$
\lim_{k\to\infty} \lim_{\eps\to0}\|u^\eps_k-u\|_{H^1(\Omega)}=0,
$$
and
\[
\lim_{k\to\infty} \lim_{\eps\to0}\tp^{\eps}\left(u_{k}^{\eps}\right) =\lim_{k\to\infty} \tp\left(u_{k}\right) =\tp\left(u\right)
\]
we can find an increasing map $\eps\to k_\eps$ such that $u^\eps_{k_\eps}\to u$ in $H^1(\Omega;\reals^3)$ and 
$$
\limsup_{\eps\to0}\tp^{\eps}\left(u_{k_\eps}^{\eps}\right)\le\lim_{k\to\infty} \lim_{\eps\to0}\tp^{\eps}\left(u_{k}^{\eps}\right) =\tp\left(u\right).
$$
\end{proof}
\begin{thm}\label{thm3}
Let $u_{\min}^{\eps}$ be the minimizer of $\tp^{\eps}$, and let $u_{\min}$
be the minimizer of $\tp$. Then,

$(i)$ the sequence $\tp^{\eps}\left(u_{\min}^{\eps}\right)$ converges
to $\tp\left(u_{\min}\right)$;

$(ii)$ the sequence $\left\{ u_{\min}^{\eps}\right\} $ converges
to $u_{\min}$ strongly in $H^{1}\left(\Omega;\reals^{3}\right)$.\end{thm}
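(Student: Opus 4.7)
My plan is to apply the standard ``$\Gamma$-convergence plus equi-coercivity implies convergence of minima and minimizers'' scheme, refined so as to deliver strong $H^{1}$-convergence for part $(ii)$. For part $(i)$, I would first exploit a recovery sequence $\{v^{\eps}\}$ for $u_{\min}$ furnished by Theorem \ref{thm:gamma_conver}$(ii)$. Minimality of $u_{\min}^{\eps}$ gives $\tp^{\eps}(u_{\min}^{\eps})\leq \tp^{\eps}(v^{\eps})$, whence $\limsup_{\eps}\tp^{\eps}(u_{\min}^{\eps})\leq \tp(u_{\min})<\infty$. Lemma \ref{lem:compactnes} then provides, along a subsequence, an element $u^{*}\in\TD$ with $u_{\min}^{\eps}\rightharpoonup u^{*}$ in $H^{1}(\Omega;\reals^{3})$, and the liminf inequality of Theorem \ref{thm:gamma_conver}$(i)$ yields $\tp(u^{*})\leq \liminf_{\eps}\tp^{\eps}(u_{\min}^{\eps})\leq \tp(u_{\min})$. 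Thus $u^{*}$ is itself a minimizer of $\tp$; a short direct calculation using the representation \eqref{eq:admisible_disp_Timo} together with the homogeneous Dirichlet condition in $H_{D}^{1}(0,L)$ shows that $\eef_{\tau}$ is strictly convex on $\TD$, so the minimizer of $\tp$ is unique and $u^{*}=u_{\min}$. Applying this to every subsequence, the full sequence satisfies $u_{\min}^{\eps}\rightharpoonup u_{\min}$ in $H^{1}$ and $\tp^{\eps}(u_{\min}^{\eps})\to \tp(u_{\min})$.

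For part $(ii)$, I would upgrade the weak $H^1$-convergence to strong convergence of the symmetric gradient and then invoke Korn's inequality. Compact embedding $H^{1}\hookrightarrow L^{2}$ (including traces) makes $\lp$ weakly continuous, so $\lp(u_{\min}^{\eps})\to \lp(u_{\min})$; combined with $(i)$ this gives $\eef^{\eps}(u_{\min}^{\eps})\to \eef_{\tau}(u_{\min})$. Both the second-order term in \eqref{eq:els_eng_eps} and the difference $\W^{\eps}-\W_{\tau}^{\eps}$ are pointwise nonnegative, and using \eqref{eq:W_eps_tau_tau} one has
\[
\int_{\Omega}\W^{\eps}(\str^{\eps}u_{\min}^{\eps})\,dx \;\geq\; \int_{\Omega}\W_{\tau}\bigl((\str u_{\min}^{\eps})_{13},(\str u_{\min}^{\eps})_{23},(\str u_{\min}^{\eps})_{33}\bigr)\,dx.
\]
Weak lower semicontinuity of the convex integrand on the right and the chain of inequalities force all asymptotic inequalities to be tight; in particular,
\[
\int_{\Omega}\W_{\tau}\bigl((\str u_{\min}^{\eps})_{i3}\bigr)\,dx \;\longrightarrow\; \int_{\Omega}\W_{\tau}\bigl((\str u_{\min})_{i3}\bigr)\,dx.
\]
Since \eqref{eq:material_bounds} makes $\W_{\tau}$ a positive definite quadratic form in $(E_{13},E_{23},E_{33})$, weak $L^{2}$-convergence of $(\str u_{\min}^{\eps})_{i3}$ combined with convergence of this associated quadratic norm yields strong $L^{2}$-convergence. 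Together with $(\str u_{\min}^{\eps})_{\alpha\beta}\to 0=(\str u_{\min})_{\alpha\beta}$ in $L^{2}$ supplied by Lemma \ref{lem:compactnes}, this delivers $\str u_{\min}^{\eps}\to \str u_{\min}$ strongly in $L^{2}(\Omega;\reals_{\mathrm{sym}}^{3\times 3})$, and Korn's inequality applied to $u_{\min}^{\eps}-u_{\min}\in\ad$ promotes this to strong convergence in $H^{1}(\Omega;\reals^{3})$.

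The main obstacle I anticipate is the bookkeeping required to separate the nonnegative pieces of $\eef^{\eps}$ and to confirm that each ``extra'' contribution -- the difference $\W^{\eps}-\W_{\tau}^{\eps}$, which records the minimization over the transverse strains $g_{\alpha\beta}$ appearing in \eqref{eq:W_tau_eps}, and the second-order term with coefficient $\tau_{R}((\eps-\eps_{r})/\eps)^{2}$ -- vanishes in the limit, leaving precisely $\eef_{\tau}(u_{\min})$. Once this is in hand, the upgrade from weak to strong $H^{1}$-convergence reduces to the familiar Radon--Riesz-type principle on a Hilbert space.
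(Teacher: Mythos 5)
Your proposal is correct and follows essentially the same route as the paper: part $(i)$ is the standard $\Gamma$-convergence-plus-compactness argument (with the useful addition that you justify uniqueness of the minimizer of $\tp$ via strict convexity of $\eef_{\tau}$ on $\TD$, a point the paper leaves implicit), and part $(ii)$ upgrades weak to strong convergence by exploiting the quadratic structure of $\W_{\tau}$ together with Korn's inequality. Your Radon--Riesz formulation is merely a repackaging of the paper's explicit expansion of $\W_{\tau}^{\eps}\left(\left(\str^{\eps}u_{\min}^{\eps}\right)_{i3}\right)-\W_{\tau}^{\eps}\left(\left(\str^{\eps}u_{\min}\right)_{i3}\right)$ into a squared $L^{2}$-norm plus a cross term that vanishes by weak convergence.
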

\begin{proof}
As to the first claim, given that $\sup_{\eps}\left\{ \tp^{\eps}(u_{\min}^{\eps})\right\} <\infty$,
it follows from Lemma \ref{lem:compactnes} that $\left\{ u_{\min}^{\eps}\right\} $
converge weakly in $H^{1}\left(\Omega;\reals^{3}\right)$ (up to a
subsequence) to $u_{\lim}\in\TD$. For a general $u\in L^{2}\left(\Omega;\reals^{3}\right)$,
by $(ii)$ of Theorem \ref{thm:gamma_conver}, there exists a recovery
sequence $\left\{ u^{\eps}\right\} $ such that 
\[
\tp(u)\geq\limsup_{\eps\to0}\tp^{\eps}\left(u^{\eps}\right)\geq\limsup_{\eps\to0}\tp^{\eps}\left(u_{\min}^{\eps}\right).
\]
From $(i)$ of Theorem \ref{thm:gamma_conver}, we deduce that 
\begin{equation}
\tp(u)\geq\limsup_{\eps\to0}\tp^{\eps}\left(u_{\min}^{\eps}\right)\geq\liminf_{\eps\to0}\tp^{\eps}\left(u_{\min}^{\eps}\right)\geq\tp\left(u_{\lim}\right).\label{eq:chain_2}
\end{equation}
Now, $u$ can be chosen arbitrarily; we  take $u=u_{\min}$, and obtain $\tp\left(u_{\min}\right)\geq\tp\left(u_{\lim}\right)$; as $\tp\left(u_{\min}\right)\leq\tp\left(u_{\lim}\right)$,
 part $(i)$ of the theorem
 follows and $u^{\varepsilon}_{\min} \rightharpoonup u_{\min} $ in $H^1\left(\Omega;\reals^{3}\right)$. 

As to the second claim, by means of \eqref{eq:els_eng_eps}, (\ref{eq:W_tau_eps}), and (\ref{eq:W_eps_tau_tau}),
we deduce that 
\begin{equation}
\begin{split}\eef^{\eps}\left(u_{\min}^{\eps}\right)-\eef_{\tau}\left(u_{\min}\right) & \geq\int_{\Omega}\left[\W^{\eps}\left(\str^{\eps}u_{\min}^{\eps}\right)-\W_{\tau}\left(\left(\str u_{\min}\right)_{i3}\right)\right]dx,\\
 & =\int_{\Omega}\left[\W^{\eps}\left(\str^{\eps}u_{\min}^{\eps}\right)-\W_{\tau}^{\eps}\left(\left(\str^{\eps}u_{\min}\right)_{i3}\right)\right]dx,\\
 & \geq\int_{\Omega}\left[\W_{\tau}^{\eps}\left(\left(\str^{\eps}u_{\min}^{\eps}\right)_{i3}\right)-\W_{\tau}^{\eps}\left(\left(\str^{\eps}u_{\min}\right)_{i3}\right)\right]dx.
\end{split}
\label{eq:W_eps-W_tau}
\end{equation}
On recalling the form of $\W_{\tau}^{\eps}$ given in (\ref{eq:W_eps_tau}),
it follows after some computations, with the aid of (\ref{eq:material_bounds}),
that there exists a constant $C>0$ for which
\[
\begin{split} & \int_{\Omega}\left[\W_{\tau}^{\eps}\left(\left(\str^{\eps}u_{\min}^{\eps}\right)_{i3}\right)-\W_{\tau}^{\eps}\left(\left(\str^{\eps}u_{\min}\right)_{i3}\right)\right]dx,\\
 & \geq C\sum_{i}\left(\left\Vert \left(\str u_{\min}^{\eps}\right)_{i3}-\left(\str u_{\min}\right)_{i3}\right\Vert _{L^{2}}^{2}+\int_{\Omega}\left[\left(\str u_{\min}^{\eps}\right)_{i3}-\left(\str u_{\min}\right)_{i3}\right]\left(\str u_{\min}\right)_{i3}dx\right).
\end{split}
\]
Combination of this inequality with  (\ref{eq:W_eps-W_tau}) yields:
\[
\begin{split}\tp^{\eps}\left(u_{\min}^{\eps}\right)-\tp\left(u_{\min}\right) & \geq C\sum_{i}\left(\left\Vert \left(\str u_{\min}^{\eps}\right)_{i3}-\left(\str u_{\min}\right)_{i3}\right\Vert _{L^{2}}^{2}\right.\\
 & \quad+\left.\int_{\Omega}\left[\left(\str u_{\min}^{\eps}\right)_{i3}-\left(\str u_{\min}\right)_{i3}\right]\left(\str u_{\min}\right)_{i3}dx\right)\\
 & \quad+\lp\left(u_{\min}^{\eps}\right)-\lp\left(u_{\min}\right).
\end{split}
\]
As $u_{\min}^{\eps}\rightharpoonup u_{\min}$ in $H^{1}\left(\Omega;\reals^{3}\right)$,
we deduce that 
\[
\int_{\Omega}\left[\left(\str u_{\min}^{\eps}\right)_{i3}-\left(\str u_{\min}\right)_{i3}\right]\left(\str u_{\min}\right)_{i3}dx\to0,
\]
thence, by part $(i)$ of the theorem, we find
\[
0\geq\lim\sup_{\eps\to0}\sum_{i}\left\Vert \left(\str u_{\min}^{\eps}\right)_{i3}-\left(\str u_{\min}\right)_{i3}\right\Vert _{L^{2}}^{2}.
\]
Thus, $\left(\str u_{\min}^{\eps}\right)_{i3}\to\left(\str u_{\min}\right)_{i3}$
in $L^{2}\left(\Omega\right)$. As the sequence $\left\{ \tp^{\eps}\left(u_{\min}^{\eps}\right)\right\} $
is bounded, it follows from Lemma \ref{lem:compactnes} that $\left(Eu_{\min}^{\eps}\right)_{\alpha\beta}\to0=\left(Eu_{\min}\right)_{\alpha\beta}$
in $L^{2}\left(\Omega\right)$; hence, $\str u_{\min}^{\eps}\to\str u_{\min}$
in $L^{2}\left(\Omega;\reals^{3\times3}\right)$. An application of
Korn's inequality concludes the proof.
\end{proof}

\section{The $\Gamma$-limit potential in terms of limit displacements}
%
%We conclude this section by writing the $\Gamma$-limit potential in terms of  limit displacements.
Once again, recall that, for every $u\in \TD$, there are $u_{i}^{0},\, \psi_{\alpha}\in H_{D}^{1}\left(0,L\right)$ such that
 \begin{equation}\label{utdf}
 u=u_{1}^{0}(x_{3})e_{1}+u_{2}^{0}(x_{3})e_{2}+\left(u_{3}^{0}(x_{3})+x_{2}\psi_1(x_{3})-x_{1}\psi_2(x_{3})\right)e_{3}
 \end{equation}
 (cf. \eqref{eq:admisible_disp_Timo}). It follows that the non-null associated strain components are:
$$
(Eu)_{13}=\frac 12 (u_{1,3}^{0}-\psi_2), \quad
(Eu)_{23}=\frac 12 (u_{2,3}^{0}+\psi_1), \quad
(Eu)_{33}=u_{3,3}^{0}+x_2\psi_{1,3}-x_1\psi_{2,3}.
$$
Thus, in view of \eqref{eq:W_tau_eps_tau}, the elastic potential \eqref{eq:en_W_eps_tau_tau} reads:
\begin{eqnarray*}
\eef_{\tau}(u)\!\!\!\!\!&=&\!\!\!\!\!\frac{1}{2}\int_0^L dx_3\int_\omega\Big( \gamma\,\eps_{r}^{-2}[(u_{1,3}^{0}-\psi_2)^{2}+(u_{2,3}^{0}+\psi_1)^{2}]\\
& & \hspace{2cm}+(\tau_{1}-\frac{\tau_{2}^{2}}{\lambda+\mu})(u_{3,3}^{0}+x_2\psi_{1,3}-x_1\psi_{2,3})^{2}\Big)da \,,
\end{eqnarray*}
whence, on choosing for the first two Cartesian axes the principal axes of inertia of the cross-section $\omega$, we arrive at the \emph{one-dimensional} elastic-energy functional of Timoshenko's beam theory:
\begin{equation}
\begin{aligned}\label{potel}
\eef_{\tau}(u)&=\frac{1}{2}\int_0^L\Big(\frac{\gamma A}{\eps_{r}^2}[(u_{1,3}^{0}-\psi_2)^{2}+(u_{2,3}^{0}+\psi_1)^{2}]\\
& +(\tau_{1}-\frac{\tau_{2}^{2}}{\lambda+\mu})[A(u_{3,3}^{0})^2+J_1\psi_{1,3}^2+J_2\psi_{2,3}^{2}]\Big)dx_3,
\end{aligned} 
\end{equation}
where
$$
A:=\int_\omega \,da, \quad J_1:=\int_\omega x_2^2 \,da, \quad J_2:=\int_\omega x_1^2 \,da,
$$
denote, respectively, the area and the moments of inertia with respect to the $x_1$ and $x_2$ axes.\footnote{Under the assumptions,
$$
\int_\omega x_\alpha \,da=0 \quad\mbox{and}\quad\int_\omega x_1 x_2 \,da=0.
$$ 
}

It also follows from \eqref{utdf} that the load potential \eqref{eq:Load_potential_eps} takes the form:
\begin{eqnarray*}
\lp\left(u\right) \!\!\!&=&\!\!\!\int_{\Omega}\left(b^{r}_1 u_1^0+b^{r}_2 u_2^0+b^{r}_3 u_3^0+x_2b^r_3\psi_1-x_1b^r_3\psi_2\right)dv\\
& +&\int_{\bnd_{N}\Omega}\left(c^{r}_1 u_1^0+c^{r}_2 u_2^0+c^{r}_3 u_3^0+x_2c^r_3\psi_1-x_1c^r_3\psi_2\right)da,
\end{eqnarray*}
whence we deduce the \emph{one-dimensional} load functional of Timoshenko's beam theory:
\begin{equation}\label{potload}
\begin{aligned}
\lp\left(u\right)&=\int_0^L \left(f_1 u_1^0+f_2 u_2^0+f_3 u_3^0+m_1\psi_1+m_2\psi_2\right)dx_3\\
&  +F_1 u_1^0(L)+F_2 u_2^0(L)+F_3 u_3^0(L)+M_1\psi_1(L)+M_2\psi_2(L),
\end{aligned}
\end{equation}
where, for almost every $x_3\in (0,L)$, we have set
$$
f_i(x_3):=\int_{\omega}b^{r}_i(x_1,x_2,x_3) \,da+\int_{\partial\omega}c^{r}_i(x_1,x_2,x_3) \,ds,
$$

$$
m_1(x_3):=\int_{\omega}x_2b^{r}_3(x_1,x_2,x_3) \,da+\int_{\partial\omega}x_2c^{r}_3(x_1,x_2,x_3) \,ds,
$$
$$
m_2(x_3):=-\int_{\omega}x_1b^{r}_3(x_1,x_2,x_3) \,da-\int_{\partial\omega}x_1c^{r}_3(x_1,x_2,x_3) \,ds.
$$
and
$$
F_i:=\int_{\omega}c^{r}_i(x_1,x_2,L) \,da, \quad
M_1:=\int_{\omega}x_2c^{r}_3(x_1,x_2,L) \,da, \quad
M_2:=-\int_{\omega}x_1c^{r}_3(x_1,x_2,L)\,da\,.
$$

Interestingly, for the elastic potential as in \eqref{potel} and the load potential as in \eqref{potload}, the  total potential \eqref{pottot} may be decomposed %, for $u$ given by \eqref{utdf}, 
into an \emph{axial-stretching} part $\tp_a$ and a \emph{bending} part $\tp_b$:
\begin{equation}\label{TPenergy}
\tp(u)=\tp_a(u_3^0)+\tp_b(u^0_\alpha, \psi_\alpha),
\end{equation}
where
\begin{equation}\label{TPenergyA}
\tp_a(u_3^0):=\frac{1}{2}(\tau_{1}-\frac{\tau_{2}^{2}}{\lambda+\mu})A\int_0^L(u_{3,3}^{0})^2\,dx_3-
\int_0^L f_3 u_3^0\,dx_3+F_3 u_3^0(L),
\end{equation}
and
\begin{eqnarray}
\tp_b(u^0_\alpha, \psi_\alpha)\!\!\!\!\!&:=&\!\!\!\!\!
\frac{1}{2}\frac{\gamma A}{\eps_{r}^2}\int_0^L(u_{1,3}^{0}-\psi_2)^{2}+(u_{2,3}^{0}+\psi_1)^{2}\,dx_3\nonumber\\
&+& \frac{1}{2}(\tau_{1}-\frac{\tau_{2}^{2}}{\lambda+\mu})\int_0^LJ_1\psi_{1,3}^2+J_2\psi_{2,3}^{2}\,dx_3
\label{TPenergyB}\\
&-&
\int_0^L f_1 u_1^0+f_2 u_2^0+m_1\psi_1+m_2\psi_2\,dx_3\nonumber\\
&- & F_1 u_1^0(L)-F_2 u_2^0(L)-M_1\psi_1(L)-M_2\psi_2(L).\nonumber
\end{eqnarray}
Thus, the minimization problem splits into two independent problems: the one for $u_3^0$, to be determined by minimizing the axial-stretching potential $\tp_a$, the other for $u^0_\alpha$ and $ \psi_\alpha$, to be determined
 by minimizing the bending potential $\tp_b$.

%\begin{remark}
\medskip
\noindent
{\bf Remark 2.}
In the isotropic case, the elastic-energy density defined in \eqref{eq:W} only depends on the two parameters
$\lambda$ and $\mu$, because we have that
$$
\gamma=\mu,\quad\tau_1=\lambda+2\mu, \quad \tau_2=\lambda.
$$ 
Consequently, the elastic modulus appearing in both the axial-streching and the bending potentials reduces to the Young modulus of the material:
$$
\tau_{1}-\frac{\tau_{2}^{2}}{\lambda+\mu}=\frac{\mu(3\lambda+2\mu)}{\lambda+\mu}.
$$

\smallskip
\noindent
{\bf Remark 3.}
The  parameter $\eps_{r}$ enters both
 the three-dimensional limit elastic-energy density (\ref{eq:W_tau_eps_tau})
%$$
%\W_{\tau}(E_{i3})=\frac{1}{2}[4\gamma(\frac{1}{\eps_{r}})^{2}(E_{13}^{2}+E_{23}^{2})+(\tau_{1}-\frac{\tau_{2}^{2}}{\lambda+\mu})E_{33}^{2}].
%$$
and the elastic part of the one-dimensional bending potential \eqref{TPenergyB}
%\begin{eqnarray*}
%\frac{1}{2}\frac{\gamma A}{\eps_{r}^2}\int_0^L(u_{1,3}^{0}-\psi_2)^{2}+(u_{2,3}^{0}+\psi_1)^{2}\,dx_3+\frac{1}{2}(\tau_{1}-\frac{\tau_{2}^{2}}{\lambda+\mu})\int_0^LJ_1\psi_{1,3}^2+J_2\psi_{2,3}^{2}\,dx_3.
%\end{eqnarray*}
(but not the elastic part of the axial-stretching potential \eqref{TPenergyA}). If, \emph{ceteris paribus}, we were to  let  $\eps_{r}\to0$ in \eqref{eq:W_tau_eps_tau}, we would achieve effortlessly a justification of the Bernoulli-Navier beam model, whose total potential obtains by letting  $\eps_{r}\to0$ in \eqref{TPenergyB}. In fact,
in the envisaged limit, the shear strains $E_{\alpha3}$ would be forced to converge to zero, which is tantamount to take $\psi_2=u_{1,3}^{0}$ and $\psi_1=-u_{2,3}^{0}$ in \eqref{utdf}.  This remark supports the engineer idea that  the 
Bernoulli-Navier model is fine for very slender beams, whereas the Timoshenko model is preferable whenever beams  are not so slender.

\section{Summary and conclusions}\label{concluding}
In Section \ref{sec:The_real_problem}, we have defined the total potential $\tp^r$ of a linearly elastic three-dimensional beam-like body; we have denoted the minimizer of this potential by 
$$
\modc u_{\min}^r=\arg\min_{\modc u\in\mathcal{U}_{r}}\tp^{r}(\modc u);
$$
and we have denoted the ratio between the diameter of the cross-section  and the length of the beam by $\eps_r$, a parameter that measures the slenderness of the beam-like body at hand. 
 In Section 3, we have recorded the form of the Timoshenko displacement field, and we have briefly discussed its mechanical significance.
In Section  \ref{sec:A-sequence-of_domains} we have constructed an $\eps$-sequence of functionals  $\tp^{\eps}$, such that $\tp^{\eps_r}$ is proportional to the ``real'' functional $\tp^{{r}}$ (see \eqref{Pir}).
After studying,  in Section \ref{sec:Compactness}, the compactness properties of sequences with bounded energy, we have identified the  $\Gamma$-limit of the sequence $\tp^\eps$ in Section \ref{sec:The-G-limit}. The $\Gamma$-limit $\tp$ turns out to be the total potential of a Timoshenko beam; in Section 7, we have written it in terms of the fields that parameterize the class of Timoshenko's displacements   (see \eqref{pottot} and \eqref{TPenergy}-\eqref{TPenergyB}), and we have shown that these parameter fields can be determined by solving two independent minimum problems for, respectively, axial stretching and bending.

In accordance with the notation introduced in Theorem \ref{thm3}, let
$$
u^\eps_{\min}=\arg\min_{u\in\mathcal{A}}\tp^{\eps}(u)\quad\textrm{and}\quad u_{\min}=\arg\min_{u\in\mathcal{TD}}\tp(u).
$$
%and
%$$
%u_{\min}=\arg\min_{u\in\mathcal{TD}}\tp(u).
%$$
The minimizer $u_{\min}$ is a Timoshenko-type displacement.
Since $\tp^r$ essentially coincides with $\tp^{\eps_r}$, we  deduce that
$\modc u_{\min}^r$ essentially coincides with $u^{\eps_r}_{\min}$. The exact relation between them follows immediately from \eqref{Pir} and \eqref{scadisp}, and is
\begin{equation}\label{minmin}
u^{\eps_r}_{\min}(x)=R^{\eps_r} \modc u_{\min}^r(R^{\eps_r}x).
\end{equation}
In Theorem \ref{thm3}, we have shown that
$$
u^\eps_{\min} \to u_{\min} \mbox{ in }H^1(\Omega;\reals^3);
$$
thus,  we can loosely say that, for small $\eps$, $u^\eps_{\min}$ is well approximated by $u_{\min}$.
In particular, for $\eps_r$ very small, $u^{\eps_r}_{\min}$ is well approximated by $u_{\min}$; we concisely write this as $u^{\eps_r}_{\min}\approx u_{\min}$.
By \eqref{minmin} we therefore find an approximation of the ``real'' displacement $\modc u_{\min}^r$, namely,
$$
\modc u_{\min}^r(x^r)\approx (R^{\eps_r})^{-1} u_{\min} (R^{1/\eps_r}x^r).
$$
This relation states that the ``real'' displacement $\modc u_{\min}^r$ is well approximated by
a Timoshenko-type displacement and that such an approximation can be constructed with the use of the minimizer of the $\Gamma$-limit we found.

\bigskip

\noindent \textbf{\textit{Acknowledgments.}} L. Falach and R. Paroni are grateful to Regione Autonoma della Sardegna for the support given under the project ``Modellazione Multiscala della Meccanica dei Materiali Compositi (M4C)''

%\smallskip

%Finally, we notice the presence of the  parameter $\eps_{r}$
%in the limit elastic energy given in (\ref{eq:W_tau_eps_tau}), {\it i.e.},
%$$
%\W_{\tau}(E_{i3})=\frac{1}{2}[4\gamma(\frac{1}{\eps_{r}})^{2}(E_{13}^{2}+E_{23}^{2})+(\tau_{1}-\frac{\tau_{2}^{2}}{\lambda+\mu})E_{33}^{2}].
%$$
%Indeed, this parameter 
%affects only the bending elastic energy, see  \eqref{TPenergy}-\eqref{TPenergyB}, that writes as 
%\begin{eqnarray*}
%\frac{1}{2}\frac{\gamma A}{\eps_{r}^2}\int_0^L(u_{1,3}^{0}-\psi_2)^{2}+(u_{2,3}^{0}+\psi_1)^{2}\,dx_3+\frac{1}{2}(\tau_{1}-\frac{\tau_{2}^{2}}{\lambda+\mu})\int_0^LJ_1\psi_{1,3}^2+J_2\psi_{2,3}^{2}\,dx_3.
%\end{eqnarray*}
%If we now let the ``fixed'' parameter $\eps_{r}\to0$ we deduce, even variationally, the energy of the Bernoulli-Navier beam model; this follows from the fact that the shear strains $E_{\alpha3}$ are forced to converge to zero, or said differently, forces $\psi_2=u_{1,3}^{0}$ and $\psi_1=-u_{2,3}^{0}$.  This remark essentially ``justifies'' the idea that for very slender beams the 
%Bernoulli-Navier beam model is fine but for not so slender beams the Timoshenko model is preferable. 
\bibliographystyle{plain}

\end{document}